\newtheorem{proposition}{Proposition}
\newtheorem{theorem}{Theorem}
\begin{document}

\preprint{APS/123-QED}

\title{\textbf{Demonstration of sequential processors with quantum advantage and analysis of classical performance limits} 
}%

\author{Shota Tateishi$^{1}$}

\author{Wenhao Wang$^{2}$}

\author{Baptiste Chevalier$^{2}$}

\author{Takafumi Ono$^{1}$}
 \email{ono.takafumi@kagawa.ac.jp} 
 
\author{Masahiro Takeoka$^{2,3}$}
\email{takeoka@elec.keio.ac.jp}

\author{Wojciech Roga$^{2}$}
 \email{wojciech.roga@keio.jp}

\affiliation{%
$^{1}$Program in Advanced Materials Science
Faculty of Engineering and Design,
Kagawa University,
2217-20 Hayashi-cho, Takamatsu, Kagawa
761-0396, Japan} 

\affiliation{%
$^{2}$Department of Electronics and Electrical Engineering, Keio University, 3-14-1 Hiyoshi, Kohoku-ku, Yokohama 223-8522, Japan} 

\affiliation{%
$^{3}$Advanced ICT Research Institute, National Institute of Information and Communications Technology (NICT), Koganei, Tokyo 184-8795, Japan
}








\date{\today}

\begin{abstract}
In this paper, we theoretically and experimentally analyze sequential processors with limited communication between parts. We compare the expressivity of sequential quantum and classical processors under the same constraints. They consist of three or four modules, each of which processes local data. The modules of the quantum processor are linked through one-qubit or one-qutrit communication, while those of the classical processor communicate through one bit or one trit.
For the classical processor, we prove bounds on its performance in terms of inequalities on correlations of the output with a target function. We theoretically show that the quantum processor violates these inequalities. We show this violation experimentally on a silicon photonics setup. We describe how to find the classical bound on correlations with arbitrary target function by reducing the problem to the minimization of an Ising-type spin-glass Hamiltonian. Our theory is applicable in general problems, such as the low-rank binary matrix approximation. 
\end{abstract}

\maketitle

\section{Introduction}\label{sec:background}


Fast progress in quantum technology in general and quantum computing in particular urges us to identify the features responsible for the fundamental difference between classical and quantum processors. In addition, quick progress in applications such as machine learning has raised questions about the benefits of quantum processing units. Therefore, a rigorous comparison and meaningful benchmarking of classical and quantum processors, even with simple structure, is crucial for further development of the field. 

In Ref. \cite{Peres}, the authors studied a simple one-qubit processor as an effective classifier. The classical data points $X_i$ were encoded in classically tuned parameters of the circuit gates (see Fig. \ref{fig:resultsa1}). The free parameters of the circuit allowed for training the processor so that the probability measured in the final for a chosen state could be used to distinguish the correct classification label. 
\begin{figure}[h]
    \centering
    \includegraphics[scale=0.6]{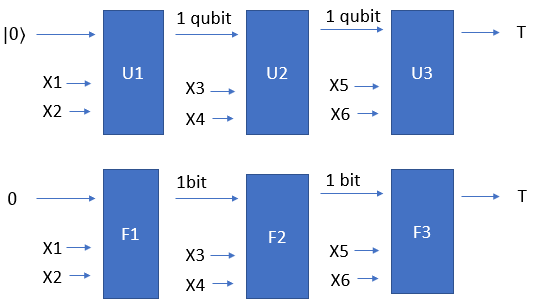}
    \caption{Single-qubit and single-bit sequential processors of similar structure.}
    \label{fig:resultsa1}
\end{figure}
The authors proved the universality of such a classifier, that is, an analog of the universal approximation theory in a single-layer neural network. This showed that even a single-qubit system can compute complicated functions of circuit parameters, assuming that this system is long enough and is appropriately trained. These ideas were experimentally demonstrated \cite{Dutta, Ono, Abe2025,Tolstobrov2024}.

Universality holds for both a quantum and a classical processor if they are sufficiently extended. 
In such a situation, it is natural to ask {\it if there is an advantage of a quantum single-qubit processor over a fairly constructed classical counterpart with fixed finite size}. 
The authors in Ref. \cite{Peres} discussed the possible advantage from entanglement and in Ref. \cite{Roga} from parallelization. In the context of communication complexity \cite{Trojek2005, Brukner2004}, the authors formulated a similar problem 
showing a quantum advantage in sending information in qubit states over the analogous scheme with classical bits. Similar conclusions appeared in the context of temporal entanglement and temporal Bell inequalities \cite{Leggett, Brukner2009}. In another paper \cite{Zukowski2012}, the author formulated the problem in the context of macro-realism and showed a quantum advantage in speeding up the computation of certain functions. In Ref. \cite{Trojek2005}, an experiment was performed on bulk optics showing that a quantum multiparty communication line with a single qubit of communication  can output a binary target function of local parameters $\{X_i\}$. The same target function is not achievable in a single-bit communication classical setup. The setup of this experiment was similar to that in Fig. \ref{fig:resultsa1}.
In addition, in Refs. \cite{Bravyi,Maslov2021} the authors showed  
that certain deterministic outputs from quantum processors cannot be generated by classical circuits in limited space. 
In this sense, one can speak of a larger expressive power of the quantum processor as in Refs. \cite{Gao2022, Anschuetz2023}.

To show these results, it was necessary to prove rigorous limits on how well some functions could be approximated by the output of classical fixed-structure processors. For the functions shown, the classical limit was known or established by the authors. However, for arbitrary functions this classical limit is generally unknown. Thus, a complete understanding of the advantage of the quantum processor over the classical processor even in limited space is still a subject of research. 

Here, we study some aspects of this problem posing the following questions: 
\begin{enumerate}
\item Is there an advantage in using a fixed-length quantum sequential processor with a single-qubit (or single-qutrit) communication between processing modules, with respect to the classical processor of the same structure processing a single bit (or trit)? 
\item Can we quantify the advantage rigorously?
\item Can we demonstrate the advantage experimentally with sufficient accuracy  on nowadays available platforms? In particular, in this work we focus on a silicon photonic platform.  
\end{enumerate}



Our original results are summarized as follows:

\begin{enumerate}
\item We report an experimental realization of the qubit and qutrit sequential processors on the silicon photonic chip. We demonstrate quantum processors that achieve an advantage in computing certain functions of the input variables. The demonstration of a qubit processor is analogous to the previous bulk optics experiment \cite{Trojek2005}. However, it involves different technology, the use of silicon photonics, and different encoding---a single photon in two modes. 
\item To face specific conditions of the design of the silicon photonics chip with four stages and qutrit communication (two photons in two modes), we needed to develop the following method. We sought for the smallest Hamming distance between the deterministic part of the ideal quantum processor output and the optimal output of the corresponding classical processor. We established this bound based on the properties of the quantum processor output and optimization in the space of binary functions.
\item We showed that the problem of minimal distance between an arbitrary binary target function and the output of a classical sequential processor with limited communication can be reduced to optimization in the Ising model. 
\item We successfully solved the Ising problems from the previous point for the classical processors with given target functions. Our models involved up to 48 spins with multipartite interactions. The models were solved on a Fixstars Amplify simulated annealing machine.
\item We indicate consequences of mapping the problem of the smallest distance between the output of a classical sequential processor and a given target for such problems as binary low-rank matrix completion and low-rank binary tensor approximation. 
\end{enumerate}

\section{Sequential processor with limited communication}\label{sec:limited}

Consider a sequential processor consisting of several data processing units (see Fig. \ref{fig:resultsa1}). In each stage, the unit processes the information obtained from the earlier section together with the input provided only locally at this stage, $X_i$, and unknown to other parts. We assume that the information transferred between the modules is strongly restricted. The last module should output a value that depends on all bits of data $\{X_i\}$ inputted at each stage, $O(\{X_i\})$. The design of a sequential logic processor consists of optimizing each unit so that the output function approaches the desired target function $T(\{X_i\})$. The latter can be interpreted as a truth table for a logical function with Boolean variables $\{X_i\}$.

In Fig. \ref{fig:resultsa1}$(a)$ we show a three-module quantum sequential processor with two classical bits $(X_i,X_{i+1})$ processed in each stage and a single-qubit inter module communication. Figure \ref{fig:resultsa1}$(b)$ shows its classical counterpart with, this time, a one-bit communication. In the modules of the quantum processor, we allow for unitary transformations, measurements, and classical functions. Therefore, the quantum processor from Fig. \ref{fig:resultsa1}$(a)$ is at least as powerful in any task as the corresponding classical processor. In the classical circuit, in addition to the restriction that communication between units is limited, the local functions performed by the modules do not have any additional restrictions. 

The goal of our research is to show an advantage of quantum processor with respect to the classical one with similar constraints. Consider the following toy example as an illustration. Assume that we want to design a simple distributed circuit that calculates the sum of several two-bit numbers modulo 4. We also assume that the total sum is promised to be either 0 or 2. We consider that the input numbers are loaded to the circuit in different modules and each of them can only communicate by sending a single bit, or a single qubit, to the next unit. The quantum processor with a single-qubit communication can find the exact solution to the above problem. The scheme is shown in Fig. \ref{fig:addmod4}. The strategy is to choose a subset of four points on a large circle on the Bloch sphere. We encode the two-bit numbers $x$ as a $x\pi/2$ rotation acting on the state transmitted by the previous module without measurement. In this way, the quantum processor can successively rotate the state vector between the four points performing addition of two-bit numbers modulo four. The promise that the total sum is in a given one-bit subset (0 or 2) implies the deterministic output of the correct result if the final state is measured in the corresponding basis. This procedure can be applied for any set of numbers satisfying the promise. The correlation computed between the target and the output of the quantum processor is maximal.

For the classical processor with one-bit communication, at each state, one must decide the state of the single bit to be passed to the next module. As a two-bits number is needed to specify the partial sum modulo 4, the classical processor that passes only one bit of information results in a possible error introduced at each stage. Therefore, this processor cannot output the correct solution of the problem for different sets of input numbers with certainty. This causes reduction of the correlation between the output and the target when the input is chosen randomly.

\begin{figure}[h]
    \centering
    \includegraphics[scale=0.4]{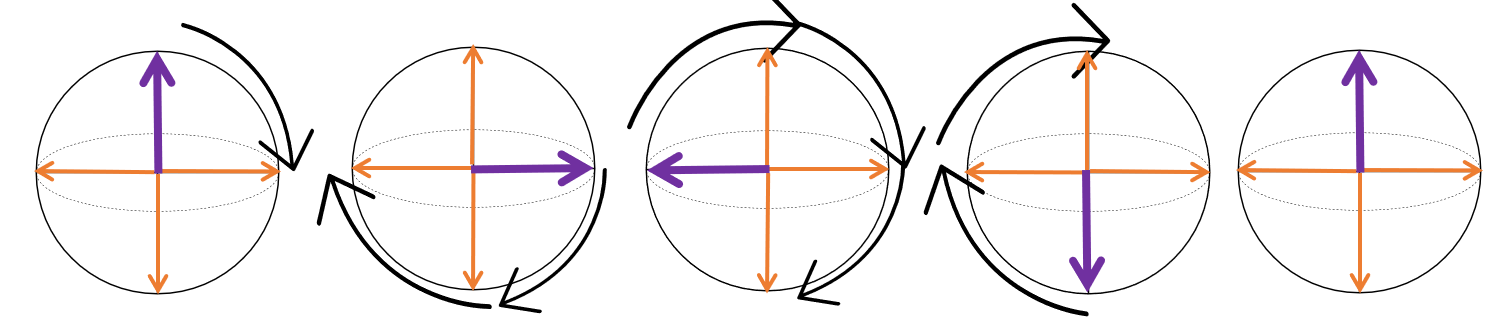}
    \caption{Example. Adding four two-bit numbers $(0,2,3,2)$ modulo 4 with a single-qubit quantum processor. The processor uses operations that rotate a qubit state vector, shown here as a thick violet arrow, between four points, indicated by thin yellow arrows, on the Bloch sphere. The added numbers can be arbitrary two-bit numbers with only one constraint that the total sum modulo 4 is either 0 or 2. For any set of two-bit numbers with the above promise, the ideal quantum processor solves the addition problem without errors.}
    \label{fig:addmod4}
\end{figure}

It is easy to imagine other functions that could be solved by simple quantum processors. For example, adding three-bit numbers modulo 8 with appropriate promise. Although the class of target functions and problems that can be solved by simple or advanced quantum processors is not fully known, we can say that this class is non trivial and contains interpretable examples interesting from a perspective of logic, arithmetic, communication complexity, and fundamental aspects of quantum mechanics \cite{Trojek2005, Brukner2004, Bravyi, Maslov2021}. 

In this paper, we consider deterministic operations with binary targets. Allowing targets to be achievable probabilistically opens yet another space for comparison between classical and quantum processors with fixed constraints. In that case, separation between performances of classical and quantum processors would result in different probability distributions as functions of the input variables. The distributions that would be unachievable by classical processors would lead to violation of variations of Bell-type inequalities established for given constraints. Some classes of such inequalities are well studied. So, the phenomenon of quantum advantage in achieving certain correlations in restricted systems is typical in the sense of not being limited to small number of separated examples. However, studying typicality, which would require different approaches, is not a subject of the present research. We focus on a few target functions tailored to our experimental setup for simplicity. This is enough to prove the existence of an advantage and technical feasibility of its demonstration. It should not be considered as a fundamental restriction. Other setups can be designed to demonstrate similar advantage and general procedures we worked out can be helpful for establishing rigorous results in other cases.


\subsection{Quantum one-qubit processor}

We consider a specific three-stage quantum processor with single-qubit communication between modules and two classical bit inputs $(X_{2i},X_{2i+1})$ at each module. Depending on the two input bits, we choose one of the four one-qubit gates to be applied to the qubit obtained from the earlier part:
\begin{eqnarray}
(0,0) &\rightarrow& \hat{1}=\begin{bmatrix}1&0\\0&1\end{bmatrix},\label{gate1}\\
(0,1) &\rightarrow& \hat{\sigma}_x=\begin{bmatrix}0&1\\1&0\end{bmatrix},\\
(1,0) &\rightarrow& \hat{H} =\frac{1}{\sqrt{2}}\begin{bmatrix}1&i\\i&1\end{bmatrix},\\
(1,1) &\rightarrow& \hat{H}^*=\frac{1}{\sqrt{2}}\begin{bmatrix}1&-i\\-i&1\end{bmatrix}.\label{gate4}
\end{eqnarray}

The initial qubit state is $|0\rangle$. In the end, the qubit is measured in the computational basis $\{|0\rangle,|1\rangle\}$. For all $2^6$ input bit values $X_1,...,X_6$ ordered in lexicographical order, 
the final measurement of our ideal processor returns either deterministic outputs $|0\rangle$ or $|1\rangle$, or a state which is a superposition of the two basis states. We define the target function as follows:
\begin{align}
T(\{X_i\})&=1 \text{ if the output is deterministically } |0\rangle,\\
T(\{X_i\})&=-1 \text{ if the output is deterministically } |1\rangle,\\
T(\{X_i\})&=0 \text{ if the output state is a superposition of } |0\rangle \text{ and } |1\rangle.
\end{align}
The target function, after being reshaped into a matrix, for convenience, can be written explicitly as follows:
\begin{widetext}
\begin{equation}
T(X_1X_2,X_3...X_6)=\left[\begin{array}{cccccccccccccccccccc}
-1&1&0&0&1&-1&0&0&0&0&1&-1&0&0&-1&1\\
1&-1&0&0&-1&1&0&0&0&0&-1&1&0&0&1&-1\\
0&0&1&-1&0&0&-1&1&1&-1&0&0&-1&1&0&0\\
0&0&-1&1&0&0&1&-1&-1&1&0&0&1&-1&0&0
\end{array}\right],\label{targetbit}
\end{equation}
\end{widetext}
where the first two bits index the row and the remaining four bits index the column. This notation is similar to the Karnaugh maps \cite{karnaugh} known in the literature on logic design. Alternatively, it can be seen as a tensor, and this is how we refer to it in Sec. \ref{sec:lowranks}.

We used this target function to calculate the correlation with the experimental output with real experimental imperfections. The experiment and results are presented in Sec. \ref{sec:silicon}. 
There, based on the experimental output $O(x_i)$ of each of $N$ randomly selected sequences $x_i$ of six bits $X_1,...,X_6$, we calculate the correlation function
\begin{equation}
C=\frac{1}{N}\sum_{i=1}^N O(x_i)T(x_i).
\end{equation}

In Appendix A, 
we prove that the bound on the achievable correlation function for an optimized classical sequential processor with three stages, and one-bit communication is 0.25. This corresponds to the minimum number of errors in the outputs of the optimal classical processor with respect to the non zero part of the target equal to 8. An optimal strategy for this processor is given in Table 1. 

\begin{table}[]
\begin{tabular}{|c|c|c|}
\hline
$i$ & $F_{i-1}=0$ & $F_{i-1}=1$ \\ \hline
$F_1$ & (0,1,1,1) & -  \\ \hline
$F_2$ & (0,1,1,0) & (1,0,0,1)  \\ \hline
$F_3$ & (-1,1,-1,1) & (1,-1,1,-1)  \\ \hline
\end{tabular}\caption{Optimal strategies for one-bit processor with three nodes. The symbols correspond to the notation from Fig. \ref{fig:resultsa1}. For example, here $(0,1,1,0)$ are bits communicated by the second node when the local inputs are $(00,01,10,11)$ respectively, and the bit $0$ is obtained from the previous node.}
\label{table:bit3} \end{table}

The processor described above realizes the adder of two-bit numbers modulo 4 from the previous example. Here, bits $00$ are associated with number 0, and $10$, $01$, and $11$ with 1, 2, and 3, respectively. The zeros of the target function correspond to the situation when the promise from the problem formulation is not satisfied. Target functions with many zeros, as in Eq. (\ref{targetbit}) are typically considered in logic circuit design, where the truth table is specified only on desired entries. Leaving the remaining entries not specified allows for more freedom in designing the optimal circuit. Therefore, not fully specified target functions should not be considered as a restriction. 

\subsection{Quantum one-qutrit processor}

We perform a similar analysis for a quantum sequential processor with three and four modules and a communication carried forward by a three-level quantum system (qutrit). Our analysis is tailored for a quantum optics experiment on silicon photonics, so the logical state of the qutrit we consider is physically encoded by the state of two photons in two different modes.
Each logical state is uniquely described by a physical state in the two-mode Fock basis as follows:
\begin{align*}
    \ket{0}_L &\rightarrow \ket{20},\\
    \ket{1}_L &\rightarrow \ket{02},\\
    \ket{2}_L &\rightarrow \ket{11},
\end{align*}
where $|20\rangle$ states for two photons in the first mode and zero photons in the second, etc.  

As the input state, we take $|20\rangle$. As gates, depending on each module input bits $(X_{2i},X_{2i+1})$, we choose the following:

\begin{itemize}
\item For the three-stage processor and for modules 2, 3, and 4 of the four-stage processor, we applied gates [Eqs. (\ref{gate1})-(\ref{gate4})] acting on the photon creation operators in each mode.
\item For module 1 of the four-step processor, the design of the circuit of our chip did not allow for applying the same gates as in the other modules. Instead, we applied the following gates
\begin{eqnarray}
\label{eq:qutrit_unitary_1}
(0,0) &\rightarrow& \frac{1}{\sqrt{2}}\begin{bmatrix}1&i\\-i&-1\end{bmatrix},\\
\label{eq:qutrit_unitary_2}
(0,1) &\rightarrow& \begin{bmatrix}0&1\\-1&0\end{bmatrix},\\
(1,0) &\rightarrow&\begin{bmatrix}-1&0\\0&1\end{bmatrix},\\
\label{eq:qutrit_unitary_4}
(1,1) &\rightarrow& \frac{1}{\sqrt{2}}\begin{bmatrix}1&-i\\i&-1\end{bmatrix}.
\end{eqnarray}

\end{itemize}
If the final measurement of an ideal processor deterministically indicates the state $|20\rangle$, we define the entry of the corresponding target function as 1. If it deterministically indicates the state $|02\rangle$, we define the entry $-1$. Otherwise, the target function entry is set to $0$. The quantum processor we built does not allow us to achieve state $|11\rangle$ deterministically from the initial state $|20\rangle$ by any unitary transformation that preserves the bosonic commutation relations. Therefore, the target function of our qutrit processor resembles a qubit processor target. 

Notice that this physical restriction stands as a limitation of the quantum processor power. Indeed, the access to the logical state $\ket{2}_L$ is forbidden by the law of the physical system used for implementation. Thus, despite this constraint, showing an advantage of a single-qutrit processor with respect to its single-trit counterpart is an even stronger evidence in favor of the quantum advantage.

The target of the four part processor after reshuffling as a matrix, for convenience, is explicitly given as
\begin{widetext}
\begin{equation}
T(X_1...X_4,X_5...X_8)=\left[\begin{array}{cccccccccccccccccccc}
0&0&-1&1&0&0&1&-1&-1&1&0&0&1&-1&0&0\\
0&0&1&-1&0&0&-1&1&1&-1&0&0&-1&1&0&0\\
-1&1&0&0&1&-1&0&0&0&0&1&-1&0&0&-1&1\\
1&-1&0&0&-1&1&0&0&0&0&-1&1&0&0&1&-1\\
1&-1&0&0&-1&1&0&0&0&0&-1&1&0&0&1&-1\\
-1&1&0&0&1&-1&0&0&0&0&1&-1&0&0&-1&1\\
0&0&-1&1&0&0&1&-1&-1&1&0&0&1&-1&0&0\\
0&0&1&-1&0&0&-1&1&1&-1&0&0&-1&1&0&0\\
0&0&1&-1&0&0&-1&1&1&-1&0&0&-1&1&0&0\\
0&0&-1&1&0&0&1&-1&-1&1&0&0&1&-1&0&0\\
1&-1&0&0&-1&1&0&0&0&0&-1&1&0&0&1&-1\\
-1&1&0&0&1&-1&0&0&0&0&1&-1&0&0&-1&1\\
-1&1&0&0&1&-1&0&0&0&0&1&-1&0&0&-1&1\\
1&-1&0&0&-1&1&0&0&0&0&-1&1&0&0&1&-1\\
0&0&1&-1&0&0&-1&1&1&-1&0&0&-1&1&0&0\\
0&0&-1&1&0&0&1&-1&-1&1&0&0&1&-1&0&0\\
\end{array}\right].
\label{target4}
\end{equation}
\end{widetext}

The three-stage processor has the target function the same as the first $4\times 16$ block of this matrix.

\begin{table}[]
\begin{tabular}{|c|c|c|c|}
\hline
$i$ & $F_{i-1}=0$ & $F_{i-1}=1$ & $F_{i-1}=2$  \\ \hline
$F_1$ & (0,1,2,1) & - & - \\ \hline
$F_2$ & (2,1,2,1) & (1,2,0,2) & (2,1,1,2) \\ \hline
$F_3$ & (-1,1,-1,1) & (1,-1,-1,1) & (-1,1,1,-1) \\ \hline
\end{tabular}\caption{Optimal strategies for one trit processor with three nodes.}
\label{Tabstage3} \end{table}

\begin{table}[]
\begin{tabular}{|c|c|c|c|}
\hline
$i$ & $F_{i-1}=0$ & $F_{i-1}=1$ & $F_{i-1}=2$  \\ \hline
$F_1$ & (0,2,1,0) & - & - \\ \hline
$F_2$ & (1,2,0,1) & (2,1,2,0) & (2,0,1,2) \\ \hline
$F_3$ & (1,0,1,2) & (2,1,1,0) & (1,2,0,1) \\ \hline
$F_4$ & (1,-1,-1,1) & (-1,1,1,-1) & (-1,1,-1,1) \\ \hline
\end{tabular}\caption{Optimal strategies for one trit processor with four nodes.}
\label{Tabstage4} \end{table}

As proven in Appendix \ref{app:classical}, for the three-stage trit processor the classical bound on the correlation with the target function is 0.4375. This corresponds to at least two errors between the output of the optimized processor and the non zero output of the target function. The explicit strategy of such processor is shown in Table \ref{Tabstage3}. For the four-stage processor, the classical limit on the correlation with the above target is 0.375, which corresponds to at least 16 errors in the optimized output. This strategy is shown in Table \ref{Tabstage4}.




\section{Silicon photonics experiment}\label{sec:silicon}

We implemented both a one-qubit and a one-qutrit processor on a silicon photonic integrated circuit \cite{Wang2019,Moody2022,Alexander2025} and evaluated their performance using the correlation function described in Sec. \ref{sec:limited}. As shown in Fig. \ref{fig:setup}(a), the experimental setup employed a $4\times4$ universal unitary transformation circuit composed of beam splitters and phase shifters \cite{Clements2016,Bogaerts2020}. In this study, we used two spatial modes of the circuit to construct a sequential processor by cascading multiple unitary transformations. Each unitary transformation consisted of a phase shifter and a Mach–Zehnder interferometer, allowing us to realize the target operations with high precision.

\begin{figure}[h]
    \centering
    \includegraphics
    [scale=0.3]
    {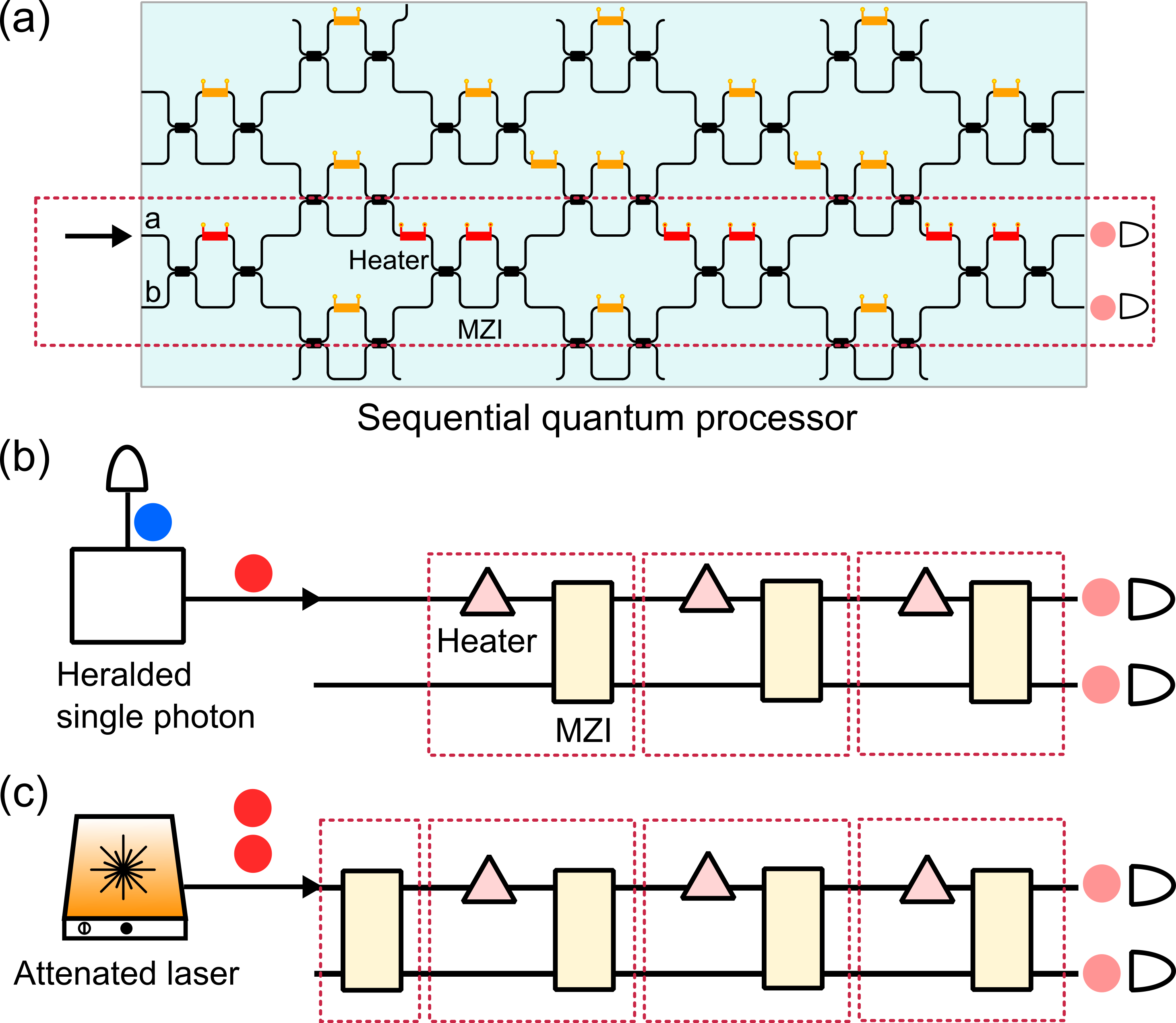}
    \caption{(a) Schematic of the sequential quantum processor implemented on a silicon photonic chip. The unitary gate was implemented using a heater and an Mach–Zehnder interferometer enclosed by the red dotted lines. (b) Overview of the one-qubit processor. A heralded single photon was injected into the sequential processor. (c) Overview of the one-qutrit processor. Two photons generated from attenuated laser light at the single-photon level were injected into the sequential processor. In panels (b) and (c), the circuit enclosed by the red dotted lines corresponds to a single unitary gate in each case.}
    \label{fig:setup}
\end{figure}

For the implementation of the one-qubit processor, heralded single photons were employed, as illustrated in Fig. \ref{fig:setup}(b). Photon pairs were generated via spontaneous four-wave mixing (SFWM), utilizing the third-order nonlinearity of silicon \cite{Sharping2006,Clemmen2009}. Specifically, a continuous-wave (CW) pump light with a wavelength of 1549.32 nm and an output power of approximately 10 mW was injected into a silicon waveguide. This generated signal and idler photons through the SFWM. The signal photons had a central wavelength of 1559.0~$\pm$~0.8 nm, while the idler photons had a central wavelength of 1539.8~$\pm$~0.8 nm. The spectral width of these photons is determined by the bandwidth of the off-chip frequency filter (200 GHz). The signal photons were used as heralds (triggers), and the idler photons were directed to the processor circuit \cite{Ma2017,Faruque2018,Faruque2019}. After passing through these filters, both the idler and the corresponding signal photons were detected using superconducting nanowire single-photon detectors. The generation rate of signal single photons obtained after passing through the circuit was approximately 30 counts per second.

In this experiment, the input state was $|10\rangle_{ab}$, where a single photon was present in mode $a$. Randomly selected operations from the set $\{ \hat{I}, \hat{\sigma}_z, \hat{H}, \hat{H}^* \}$ were applied through three successive unitary transformations. At the output, the occurrence probabilities of the states $|10\rangle_{ab}$ and $|01\rangle_{ab}$ were measured from the photon number distribution obtained over a 1-s integration period. Using approximately 85000 detection events, the correlation function with respect to the target state was calculated.

Figure~3 shows a histogram of the obtained correlation function. To construct the histogram, the approximately 85000 experimental data points were divided into subsets of 1000 events each. The correlation function was calculated for each subset, and the frequency distribution of these values was plotted. From the histogram, the mean of the correlation function was 0.490, and the square root of the variance (standard error of the mean) was 0.017. These results indicate that the experimentally obtained correlation function significantly exceeds the limit of 0.25 achievable by the classical processor with one-bit communication. In the figure, the vertical blue line corresponds to this limit. 

Next, for the implementation of the one-qutrit processor, we attenuated a laser beam to the single-photon level and approximated a two-photon state by post-selecting events in which two photons were detected simultaneously [Fig.~\ref{fig:setup}(c)]. 
Specifically, each of the two output ports of the photonic integrated circuit was connected to an off-chip fiber beam splitter, followed by single-photon detectors at the two outputs of each splitter. Coincidence detections of one photon at each detector were used to identify events corresponding to the $\lvert 20 \rangle_{ab}$ or $\lvert 0;2 \rangle_{ab}$ states. This detection configuration functions as a pseudo-photon-number-resolving scheme and allows us to detect photon bunching in each mode.

To suppress multi photon contributions, the input laser was strongly attenuated such that the single-count rate was approximately $2.5\times10^{5}$ counts per $0.5~\mathrm{s}$. Under this condition, we observed approximately $5000$ coincidence events after the beam splitter (corresponding to about $10000$ events before the splitter). The mean photon number of the attenuated coherent state was estimated to be $\sim 0.08$. Given that the photon-number statistics follow a Poisson distribution, the contribution of three-photon or higher-order events to the measurement results is estimated to be $\sim 0.008\%$. Therefore, the impact of multi photon components introduces a negligibly small error of below $0.01\%$ in the experimental data.

The input state for the qutrit experiment was $|20\rangle_{ab}$, corresponding to two photons in mode $a$. Four unitary transformations, randomly selected from the operations defined in Eqs.~(\ref{gate1})–(\ref{gate4}) and (\ref{eq:qutrit_unitary_1})–(\ref{eq:qutrit_unitary_4}), were sequentially applied. The output probabilities of the states $|20\rangle_{ab}$ and $|02\rangle_{ab}$ were measured, and approximately 65,000 detection events were collected. Correlation functions were then calculated based on these data.

Figure~4 shows a histogram of the correlation function for a four-module quantum processor with one qutrit communication. As in Fig.~3, approximately 65000 experimental data points were divided into subsets of 1000 events each, and the correlation function was calculated for each subset. The histogram displays the frequency distribution of the resulting correlation values with respect to the target function (\ref{target4}). From the histogram, the mean of the correlation function was 0.525, and the square root of the variance was 0.014. These results indicate that the experimentally obtained correlation function significantly exceeds the classical limit which, as shown in Fig. \ref{fig:resultsc}, is 0.375.

\begin{figure}[h]
    \centering
    \includegraphics[scale=0.3]{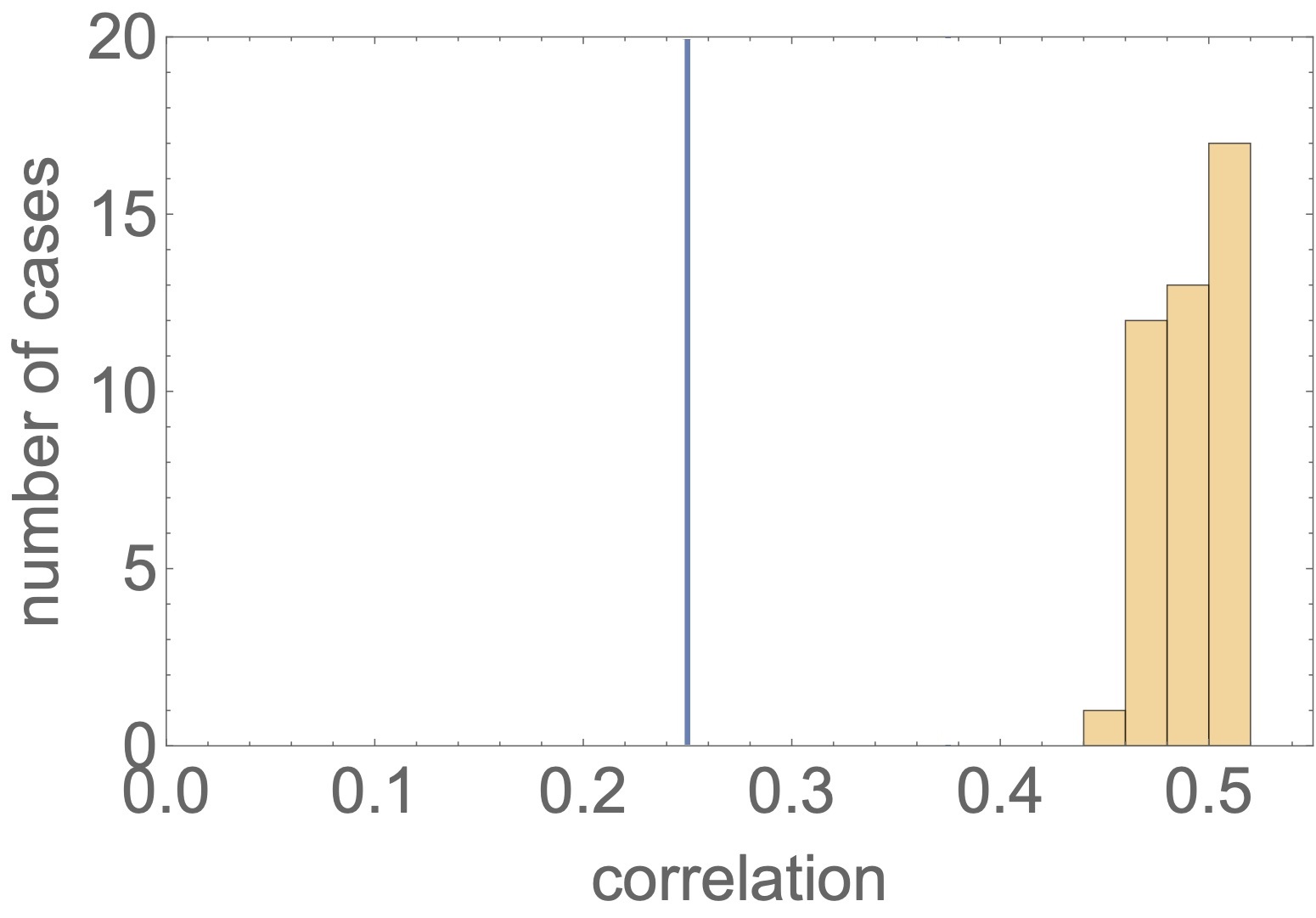}
    \caption{The histogram of the correlation function based on the data from three-module single-qubit processor experiment on silicon photonics. The blue line shows the classical limit for the correlation, which for the considered target function is equal to 0.25. 
    }
    \label{fig:resultsa}
\end{figure}

\begin{figure}[h]
    \centering
    \includegraphics[scale=0.3]{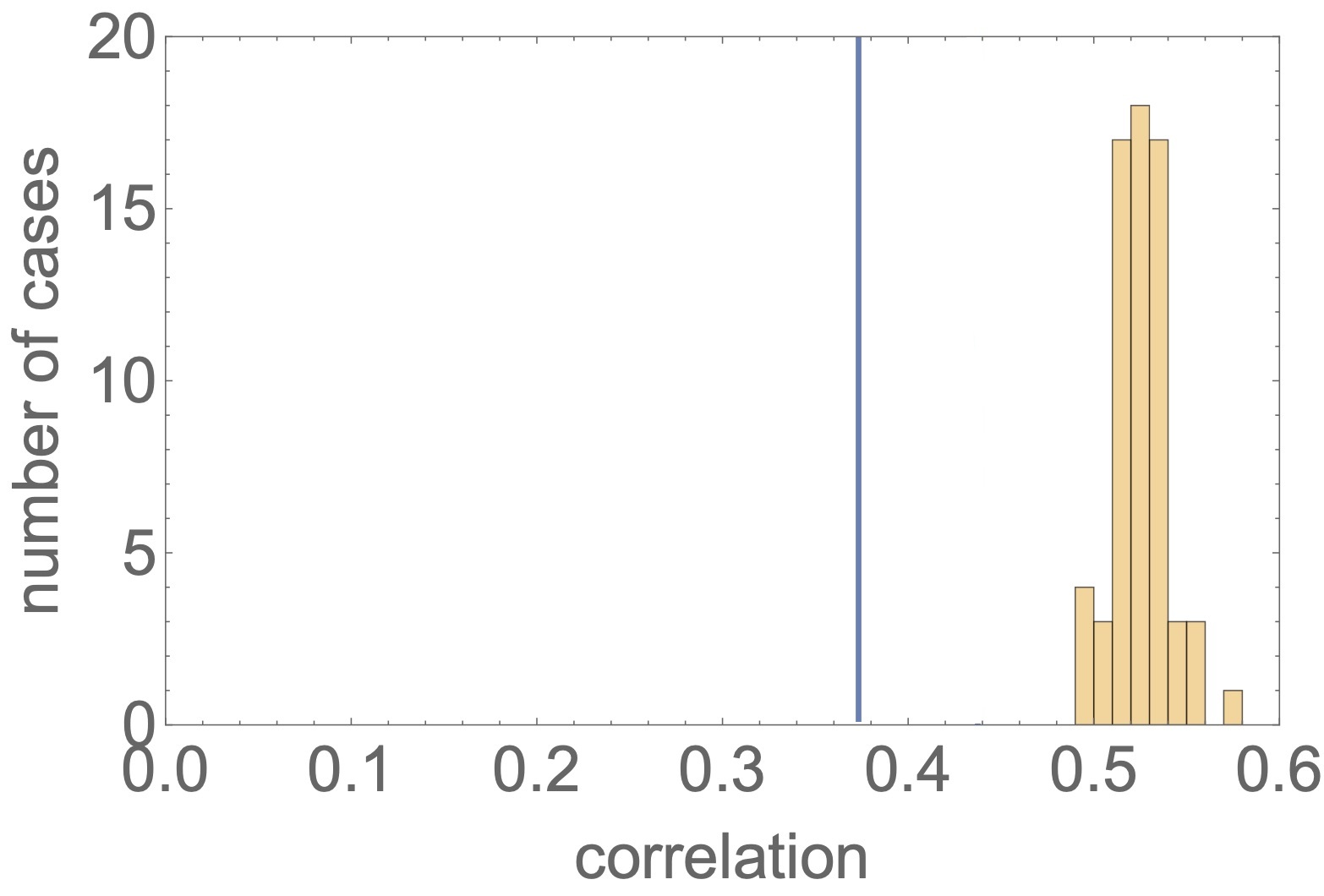}
    \caption{The histogram of the correlation function based on the data from four-module single-qutrit processor experiment on silicon photonics. The blue line shows the classical limit for the correlation, which for the considered target function is equal to 0.375.
    }
    \label{fig:resultsc}
\end{figure}


\begin{figure}[h]
    \centering
    \includegraphics[scale=0.15]{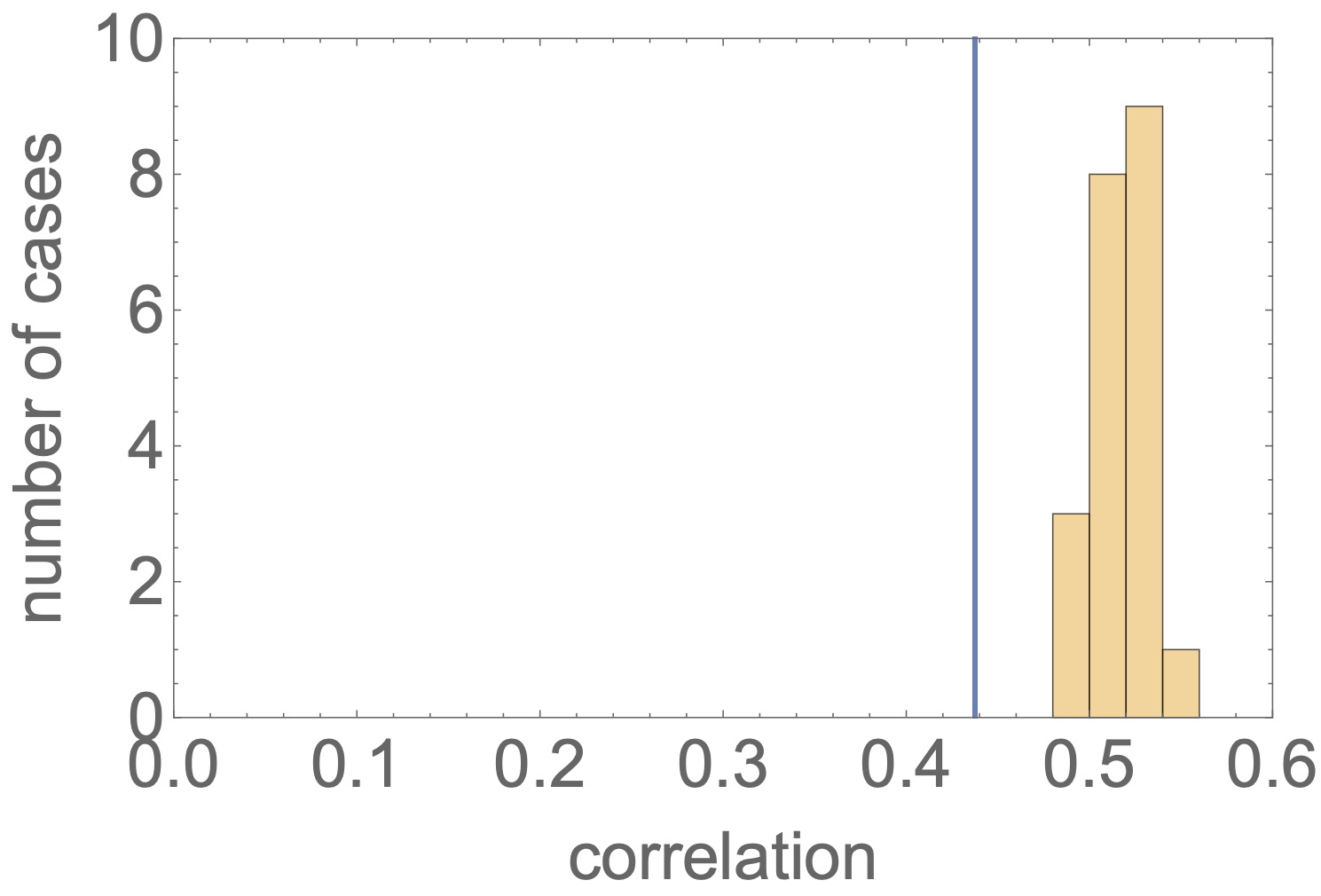}
    \caption{The histogram of the correlation function based on the data from three-module single-qutrit processor experiment on silicon photonics. The blue line shows the classical limit for the correlation which for the considered target function is equal to 0.4375.
    }
    \label{fig:resultsb}
\end{figure}

Finally, in post-processing, from the data of the qutrit experiment with four modules, we extracted data that correspond to the first gate fixed as the identity. In this way, we performed the analysis of the three-module processor with single-qutrit communication. The target function in this case is given by the first four rows of Eq. (\ref{target4}). We built the histogram of the values of the correlation function for 65 subsets of data with approximately 250 points in each subset. The average is 0.509 with the standard error of the average 0.068. This allows us to conclude that the observed value of the correlation significantly exceeds the limit for the correlation achievable by the classical processor with one-trit communication, which, as shown in Fig. \ref{fig:resultsb}, is equal to 0.4375.  

In conclusion, in each case that we analyzed, we observe that the realistic experiment produces the correlation function close to 0.5, which is the maximum value of the correlation function with respect to the target [Eqs. (\ref{targetbit}) or (\ref{target4})] with half of the entries equal to zero. 
The bounds on the achievable correlation function for the optimal classical sequential processors are shown in Figs. \ref{fig:resultsa}--\ref{fig:resultsb} as the vertical blue lines are established in Section \ref{sec:limited}.
We conclude that the classical processor cannot output a function that approaches the target function for any deterministic strategy. In this sense, the quantum processor significantly outperforms the classical one in this task, showing a separation of expressivity between the classical and quantum processors.

\section{Optimal classical sequential processors}\label{sec:optimal}

In the previous sections, we considered the problem of how well a classical sequential processor with single-bit or single-trit communication can approximate a given target function of distributed input data. The target functions considered until now were specific ones generated by the outputs of quantum processors. We showed the gap in expressivity of the classical processor with respect to the quantum processor for specific tasks. The proofs derived in Appendix \ref{app:classical} are based on explicit properties of the target functions.

In this section, we consider a broader problem. Given an arbitrary binary target function, we ask what is the best approximation that can be obtained by the output of a classical sequential processor. The result can be useful to establish rigorous bounds on classical processors in other designs and experimental demonstrations of quantum advantage in limited space processors. 

The problem what is the class of function which can be obtained as an output of a quantum processor depends on specific constraints. We will not answer this question in this paper. However, establishing bounds on classical processor of a given structure is a first step to identify the classes of functions achievable in quantum regime, as we know that all processors that are classical in ideal conditions can be realized by the quantum counterparts.

In general, saying how close a given binary function of Boolean arguments can be approximated by outputs of a classical processor with a fixed structure belongs to a class of high complexity. 
To solve it efficiently, one would need an algorithm akin to solving the low-rank binary matrix approximation problem \cite{Valiant, Lokam, Dan}. However, this problem or its alternatives are known to be NP hard. In this section, we reduce this task to an Ising problem, which can be approachable by nowadays and future Ising solvers based on simulated or quantum annealing, or other techniques and,  under certain conditions, can be efficiently solved.

Consider an $N$-stage sequential processor with one-bit communication between parts and two-bit inputs in each module. The local inputs are not known to other parts. The performance of the processor can be optimized on the strategies of the modules.\\


We formulate two tasks that are addressed in the subsequent sections:
\begin{enumerate}
\item Task 1: Decide if a given binary target function $T(\{X_i\})$ can be achieved as an output from the processor. 
\item Task 2: Determine the smallest Hamming distance between the output of the classical processor $O(\{X_i\})$ 
and the target function $T(\{X_i\})$. 
\end{enumerate}


\subsection{Task 1: Existence criterion}\label{sec:criterion}

We formulate the existence criterion as the following practical procedure proven in Appendix \ref{app:existence}.
Consider a binary function $T$ of several variables. Consider also an arbitrary sequential processor with a single-bit communication and a given distribution of variables $\lbrace X_i\rbrace$ between the modules. The following procedures allow us to check if the sequential processor can output $T$ on $\lbrace X_i\rbrace$:
\begin{enumerate}
    \item Reshuffle the function $T$ into a matrix $M$ whose rows are indexed by the local variables of earlier modules and columns are indexed by the remaining variables.
    \item Compute the row rank of $M$.
    \item Repeat for other reshufflings of $T$.
\end{enumerate}
If, for any $M$, the associated row rank is at most 2, $T$ can be outputted by the sequential processor.


For sequential processors with single-trit communication, the row rank should be 3, and so on.

\subsection{Task 2: Optimal circuit as the Ising problem}

\subsubsection{Example}

\begin{figure}[h]
    \centering
    \includegraphics[scale=0.5]{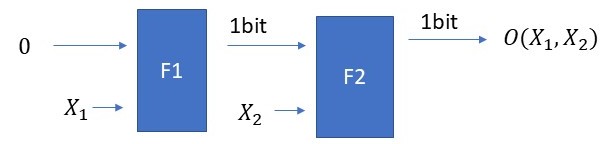}
    \caption{Two-stage sequential processor with one-bit communication and one bit input $X_i$ at each stage. The output function of inputs is denoted by $O(X_1,X_2)$.}
    \label{fig:twoprocessor}
\end{figure}

For simplicity, let us consider an example---a two-module sequential processor with one-bit communication and one-bit input at each stage, as shown in Fig. \ref{fig:twoprocessor}. The first part, depending on the bit $X_1$ it obtains as the input, chooses which binary variable to pass to the next stage. Its strategy is given by a binary function $F_1$ which is defined by binary variables $a_1$ and $a_2$ returned for the respective input $X_1$, i.e., 
$$F_1(X_1) = \begin{cases}
    a_1 \text{ if } X_1=0,\\ a_2 \text{ if } X_1=1.
\end{cases}$$ 
The second module strategy $F_2$ is a function of the bit passed from the previous part and an extra bit $X_2$. This function's output is given by a four-bit binary vector of binary variables $a_3,..., a_6$ (see Fig. \ref{fig:variables}). The goal is to choose the values of the variables $a_1,..., a_6$ such that the output from the last part of the processor, for given variables $X_1$ and $X_2$, was as close as possible to the given binary target function $T(X_1,X_2)$. 

\begin{figure}[h]
    \centering
    \includegraphics[scale=0.4]{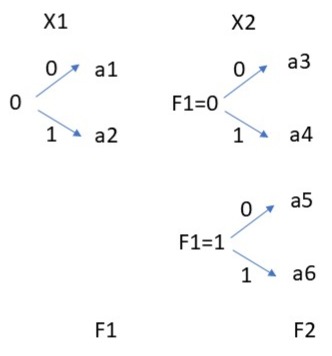}
    \caption{Example of binary variables that define each part's strategy functions $F_i$ of their respective inputs. In this case, there are two parts in the processor. Each part processes the bit obtained from the previous part and one extra bit $X_i$.}
    \label{fig:variables}
\end{figure}

For a two-stage processor, the problem is trivial because the target has only four values. We can associate the first two elements of the target function with $a_3$ and $a_4$, and the other two with $a_5$ and $a_6$ and put $a_1=0$ and $a_2=1$. However, we want to show that we can obtain this result as the optimization of an Ising like problem. Then, we generalize the entire formalism to any number of stages and more general setups in which the problem is highly nontrivial.

For any $X_1$ and $X_2$, the output of the processor $O(X_1,X_2)$ is
\begin{equation}
O(X_1,X_2)=(1-a_{1+X_1})a_{3+X_2}+a_{1+X_1}a_{5+X_2}.
\end{equation}

For binary functions, the minimum Hamming distance coincides with the maximum of the correlation function. In this example, it is given as
\begin{eqnarray*}
C&=&\sum_{\{X_i\}}T(X_1,X_2)O(X_1,X_2)\\
&=&T(0,0)[(1-a_1)a_3+a_1a_5]\\
&+&T(0,1)[(1-a_1)a_4+a_1a_6]\\
&+&T(1,0)[(1-a_2)a_3+a_2a_5]\\
&+&T(1,1)[(1-a_2)a_4+a_2a_6].
\end{eqnarray*}

The last formula is a quadratic form in binary variables. This can be written as an Ising problem with nonlocal two-body interactions. Its maximum provides the set of bits $a_1,...,a_6$ that maximizes $C$.

\subsubsection{Generalization}

The construction of the example above can easily be generalized to arbitrary processor length, arbitrary communication, and arbitrary number of local input bits. Let us give here the explicit formula for a three-stage processor with one-bit communication and a two-bit local input, as in the lower part of Fig. \ref{fig:resultsa}. The binary variables that define the strategies are shown in Fig. \ref{fig:variables3}.

\begin{figure}[h]
    \centering
    \includegraphics[scale=0.45]{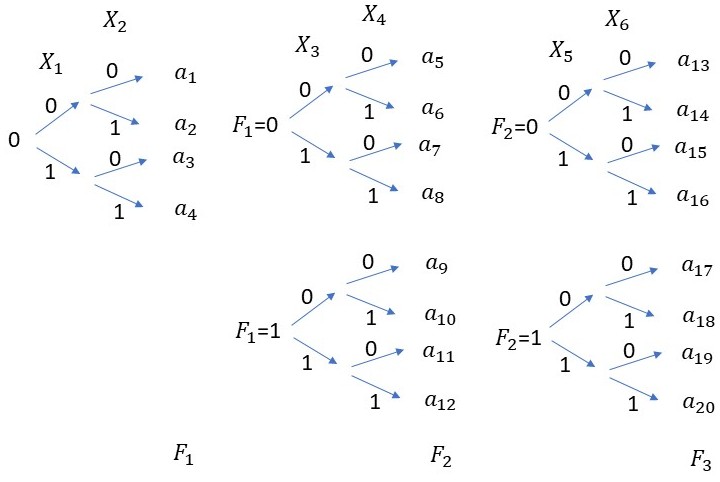}
    \caption{Example of binary variables that define each part's strategy functions $F_i$ of their respective inputs. In this example, there are three modules in the processor. Each part processes a bit obtained from the previous part and two extra bits.}
    \label{fig:variables3}
\end{figure}

The formula for the correlations between the target and the output reads
\begin{eqnarray}
C&=&\sum_{\{X_i\}}T(\{X_i\})\label{correlationbit3}\\
&*&[(1-a_{1+2X_1+X_2})(1-a_{5+2X_3+X_4})a_{13+2X_5+X_6}\nonumber\\
&+&(1-a_{1+2X_1+X_2})a_{5+2X_3+X_4}a_{17+2X_5+X_6}\nonumber\\
&+&a_{1+2X_1+X_2}(1-a_{9+2X_3+X_4})a_{13+2X_5+X_6}\nonumber\\
&+&a_{1+2X_1+X_2}a_{9+2X_3+X_4}a_{17+2X_5+X_6}]\nonumber
\end{eqnarray}

In this case, the correlation function is a polynomial of degree three in the binary variables that can be mapped in a standard way to the Ising model with up to three spins, and possibly far-range interactions.

Further generalizations are needed to consider four-stage processors and the processor that allows for trit-based communication. In this case, variables $a_i$ are integers from the set $\{0,1,2\}$. Because we routinely use solvers acting on binary variables, before formulating the formulas for correlations, we first map each trit variable to two one-bit variables. We do not show the final formulas that can be easily obtained with the same strategies as described above. 

\subsubsection{Solvers}

In the previous sections, we showed that finding the highest correlations between the output of the classical sequential processor with restricted communication and a given binary target function is equivalent to solving an Ising problem for a chain of spins with nonlocal and multi-spin interactions. This problem is NP hard, and efficient solvers, for the general case, do not exist. However, one can use specific heuristics based on approximate Ising solvers.  

In our research, we use a dedicated computer provided by Fixstars Amplify \cite{fixstars}. It finds solutions through the process of simulated annealing. We used the solver to find candidate solutions for the maximum correlation. The correlation is taken between the output of the classical three-module sequential processor with two-bit local variables and one-bit communication and the target given by the deterministic part of the output of the corresponding quantum processor with one-qubit communication. The problem involved 20 binary variables (spins) as shown in Fig. \ref{fig:variables3}. We defined the Ising Hamiltonian corresponding to the correlation function (\ref{correlationbit3}). (The number of terms in the Hamiltonian is roughly the number of nonzero elements in the target function times 2 to the power of the number of modules. Note that this is much less than the number of all possible binary functions which for one-bit processor with two-bit inputs in each module is roughly 2 to the power of eight times the number of modules. The brute force optimization would require this number of steps.) The solver was able to find the solution with eight errors with respect to the target function. as shown in Table \ref{table:bit3}. As proven in Appendix A, this is the lower bound on the number of errors for this target. Therefore, we conclude that the solver found the optimal solution.

We repeated the numerical experiment on Fixstars Amplify solver successfully finding cases corresponding to the optimal strategies of the classical processors with single-trit communication, three and four modules, and the target functions given by the deterministic parts of the corresponding quantum optical processors with qutrit communication as in the quantum optics experiment. The target functions are given in Eq. (\ref{target4}) or the first four rows of it. The lower bounds on the number of errors are derived in Appendix A. As the optical processor could deterministically achieve only two solutions, $|20\rangle$ and $|02\rangle$, the number of bits in the output of the last stage was the same as in the one-bit processor. This is 4 for strategies associated with possibly three different outputs from the previous module. The total number of binary variables in the optimization of the three-module trit processor was 44. We found strategies of each module that achieved the minimum number of errors, which is shown in Table \ref{Tabstage3}. 

For finding the optimal four-module trit processor strategy with Fixstars Amplify, we needed (for license restriction) to limit the number of binary variables to 48. We did it by guessing the strategies of the first and last modules. The solver found the optimal strategies of the intermediate modules as shown in Table \ref{Tabstage4}.

\section{Discussion: extension of the results}\label{sec:lowranks}

In this paper, we have studied theoretically and experimentally the separation between performance of quantum and classical processors of the same structure tackling the problem of approximation of certain functions of input variables when the variables are available only locally to different modules and the communication between the modules is restricted. The quantum processor could potentially use intermediate measurements and effectively collect and pass classical communication. Therefore, the quantum processor could simulate a classical one. However, the opposite is not true. In particular, binary functions that can be achieved by the quantum sequential processors with qubit or qutrit communication can only be approximated by the output of the corresponding classical processors. We showed that existing silicon optics technology allows us to demonstrate this separation.

To prove the separation in the cases tailored to our experiment, we needed to develop a general theoretical framework that allows us to put strict boundaries on the performance of classical processors. We did it in two ways. The first one is based on the properties of specific target binary functions related to the output of our experiment. The second way of finding classical bounds does not rely on particular structure of the target function. It comes from finding the minimimum Hamming distance between a given binary function and a set of binary functions that satisfy certain conditions. 


In our case, the condition is to be the output of a sequential processor with communication between the modules restricted to a link with given dimensionality. The restriction has a clear interpretation when we reshape the function in the form of a matrix. The rows of the matrix correspond to the processor response to given outputs of one module. As the number of symbols in the output is restricted, the number of different rows is also restricted. The number of different rows in a binary matrix corresponds to its rank. So, the problem of finding the optimal output of a two-module processor with limited communication to an arbitrary binary function directly corresponds to the problem of approximation of an arbitrary binary matrix by a binary matrix of specific rank. The optimality is here defined by the Hamming distance. We reduce this optimization to an Ising problem and solve it in certain cases by existing Ising solvers.

Unlike low-rank approximation of real and complex matrices that can be done by reducing rank in the singular value decomposition, the low-rank approximation of binary matrices is known to be NP hard. A side result of our work is the reduction of this problem to the generalized Ising problem. The general Ising problem is also known to be hard in general, but dedicated approximate solvers exist and are still improving. Notably, quantum computers are candidates for interesting heuristics in the resolution of this problem.

Another restriction on approximation of the general binary target function of a number of variables can be the output of a multimodule processor. In this case, we approximate a general tensor (indexed by variables) in terms of a concatenation of tensors of specific ranks. Finding an optimal approximation, in terms of the Hamming distance, is again a difficult task. We can explicitly reduce it to a generalized Ising problem and approach it with existing Ising solvers.

The third related application of the techniques developed in this paper is the binary matrix completion. We could ask, what is a good fixed rank binary matrix completion of a given binary matrix with restricted number of known entries. Again, we tackle this problem by creating the binary matrix with highest possible correlation function with the existing part of the reference matrix and with constraints on the ranks in different reshapings of the matrix. 

Finally, the direct application of our techniques is the logic and sequential logic design. In fact, an arbitrary binary target function of a number of variables can be interpreted as the truth table for some logical formulas of the corresponding Boolean variables. The optimal construction of logical circuits or circuits that meet specific conditions, such as being distributed or sequential, is a highly desirable and unsolved problem. Although approximations exist and are widely used, techniques, such as the one proposed in this paper, can contribute to design improvement. 

\begin{acknowledgments}
We thank Marek \.Zukowski for valuable discussions. This work was supported by JST Moonshot $R\&D$, Grant No. JPMJMS226C and Grant No. JPMJMS2061, JST ASPIRE,
Grant No. JPMJAP2427, JST COI-NEXT Grant No. JPMJPF2221, JST ERATO Grant No. JPMJER2402, and JSPS KAKENHI Grant No. JP24K00559, JST SPRING Grant No. JPMJSP2123.
\end{acknowledgments}

\appendix
\section{Classical bounds on correlations}\label{app:classical}

In this appendix, we establish bounds on the correlations with the target functions given in Eqs.  (\ref{targetbit}) and (\ref{target4}), which are achievable by outputs of sequential classical processors with one-bit and one-trit communication, respectively, as shown in Figs. \ref{fig:resultsa}--\ref{fig:resultsb}. The maximum correlation is uniquely defined by the minimum number of errors between the output of the classical processor and the target function. In the following parts, we establish these results as propositions.

\subsection{Bound for one-bit processor}

\begin{proposition}\label{prop:bit3}
The number of errors between the output of the three-stage processor with one-bit communication and the nonzero part of the target function (\ref{targetbit}) cannot be less than 8.
\end{proposition}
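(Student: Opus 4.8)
The plan is to exploit the rigid layered structure of the one-bit processor. Write the modules as $f_1:\{0,1\}^2\to\{0,1\}$, $f_2:\{0,1\}^3\to\{0,1\}$ and $f_3:\{0,1\}^3\to\{-1,+1\}$, so that $b_1=f_1(X_1,X_2)$, $b_2=f_2(b_1,X_3,X_4)$ and $O=f_3(b_2,X_5,X_6)$. Reading the output as a $4\times 16$ sign matrix whose row is indexed by $(X_1,X_2)$ and whose column is indexed by $(X_3,X_4,X_5,X_6)$, I would first record three structural facts: (i) since $b_1$ depends only on $(X_1,X_2)$, the four rows of the output take at most two distinct values $P_0,P_1\in\{-1,+1\}^{16}$; (ii) inside a fixed $P_c$ the sixteen columns split into four blocks indexed by $(X_3,X_4)$, and each block equals one of two leaf patterns $Q_0,Q_1\in\{-1,+1\}^4$ (the two restrictions of $f_3$), the choice being $h_c(\cdot)=f_2(c,\cdot)$; (iii) the same pair $Q_0,Q_1$ is shared by $P_0$ and $P_1$. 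Since each nonzero (row, block) cell of the target carries exactly two nonzero $\pm1$ entries, the error count $E$ decomposes as a sum over the sixteen cells, and the claim is precisely $E\ge 8$ (equivalently $C=\frac{1}{2}-E/32\le\frac{1}{4}$).

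Throughout I would use the antipodal identity: for the Hamming distance $d$ and any $w$, the quantity $d(w,s)+d(w,-s)$ equals the ambient dimension. From the target one reads that rows $1,2$ are supported on the same columns with opposite signs in every block, and likewise rows $3,4$; in particular the two members of such a pair read the same leaf half in each block. Hence if $f_1$ places $\{1,2\}$ (or $\{3,4\}$) into one group, those two rows share a pattern $P_c$ and, block by block, pin two cells with antipodal two-entry targets onto the same leaf half, so the identity gives exactly $2$ errors per block, i.e. $8$ errors from that pair alone. Since every split of type $4/0$ and $3/1$, as well as the balanced split $\{\{1,2\},\{3,4\}\}$, leaves some pair unsplit, this already yields $E\ge 8$ in all of them.

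The remaining case is the balanced split that separates both pairs, so the groups are $\{\{1,3\},\{2,4\}\}$ or $\{\{1,4\},\{2,3\}\}$; here each group pairs a member of $\{1,2\}$ with a member of $\{3,4\}$. The key point is that such a pair is top-type on complementary blocks, so within each block the two rows occupy complementary halves and their joint demand is a full vector $D_j\in\{-1,+1\}^4$. Computing these I expect four distinct ideals $D_1,D_2=-D_1,D_3,D_4=-D_3$ with $D_1\neq\pm D_3$, whence the pattern's error equals $\sum_j d(Q_{h_c(j)},D_j)$, i.e. the cost of covering the four points $D_1,\dots,D_4$ in the Hamming cube by two centers $Q_0,Q_1$. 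I would bound this $2$-center cost below by $4$: if one center serves an antipodal pair its contribution is $d(Q,D)+d(Q,-D)=4$; otherwise the centers serve $\{D_1,D_3\},\{D_2,D_4\}$ (or $\{D_1,D_4\},\{D_2,D_3\}$) and the triangle inequality gives at least $d(D_1,D_3)+d(D_2,D_4)=2+2=4$. As this holds for every choice of $Q_0,Q_1$, the shared-leaf constraint (iii) cannot lower it, so the two patterns contribute at least $4+4=8$, proving $E\ge 8$.

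I expect the main obstacle to be the balanced split: one must recognize the reduction of a single pattern's error to a $2$-center covering problem, verify that the four induced block-demands $D_1,\dots,D_4$ are genuinely distinct (so that no leaf pattern can cheaply serve an antipodal pair), and check that the per-pattern bound is uniform in $Q_0,Q_1$ so that sharing the leaves across $P_0$ and $P_1$ does not undercut it. The coincident-pair groupings, by contrast, drop out immediately from the antipodality identity.
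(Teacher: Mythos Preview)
Your proof is correct and follows essentially the same strategy as the paper: exploit the layered structure to see that the $4\times 16$ output matrix has at most two distinct rows, each of which (viewed as four length-$4$ blocks) uses at most two leaf patterns, and then bound the resulting $2$-center covering cost block by block. The paper carries out only the representative grouping $\{1,3\},\{2,4\}$ and asserts that ``the same can be repeated for all other choices,'' whereas you give the full case split---handling the groupings that leave $\{1,2\}$ or $\{3,4\}$ intact by the antipodal identity and the two separating balanced splits by the explicit $2$-center bound---so your argument is a more complete version of the same proof.
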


\begin{proof}
In the proof, we explicitly use the properties of the target function (\ref{targetbit}). The rows of this matrix correspond to the processor action depending on the first-stage output. As the first module can have only two different symbols, the target reshuffled as a $4\times 16$ matrix needs to be of rank 2. So, two lines need to be identical. Let these lines be first and third (the same can be repeated for all other choices). We add the two rows, obtaining a binary vector of length 16.

We reshape this vector into a matrix $4\times 4$. It needs to be rank 2, as the rows describe the processor's actions depending on the output from the second stage, which can output at most two symbols.  However, we observe that each row is different and that the minimum number of different bits between each pair of rows (minimum Hamming distance) is 2. We need to correct two bits in two rows. In this way, we necessarily introduce four errors.

Finally, we have the remaining lines 2 and 4 of the target reshaped into the matrix $4\times 16$. When we reshape each of these lines into $4\times 4$ matrices, for the same reason as before, we need to correct four additional bits.

This justifies that the number of errors that we need to introduce to the target function (\ref{targetbit}) to make it compatible with the class of functions output by the discussed classical sequential processor is not less than 8. 

Performing simulated annealing combinatorial optimization on Fixstars Amplify 
we find a strategy for each module of the processor such that the final output  approximates the non-zero part of the target with exactly eight errors. The example is shown in Table \ref{table:bit3}. 

This concludes the proof. The minimum eight errors correspond to the maximum value of the correlation between the output and the target equal to 0.25.
\end{proof}

Actually, because the size of the problem is relatively small, we could guess the solution with only eight errors. However, we perform the simulation on the annealer to demonstrate that our numerical optimization tool works well.

\subsection{Bound for three-module trit processor}

\begin{proposition}\label{prop:trit3}
The number of errors between the output of the three-stage processor with one-trit communication and the nonzero part of the target function (\ref{target4}) (lines 1 to 4) cannot be less than 2.
\end{proposition}

\begin{proof}
The target function is the first four rows of Eq. (\ref{target4}). As the rows of the matrix correspond to the functions in modules 2 and 3 given one of three symbols output from module 1, the rank of the matrix has to be equal to 3. Because the matrix is binary, the rank restriction requires that two of the rows be identical. 

Without loss of generality, assume that the first and third rows are identical (the procedure can be repeated in all other cases). We complete the undefined elements of the first row by the defined elements of the third row. %
We observe that we have four blocks of size 4. Moreover, all are different, and the number of different bits between each pair of the four blocks is equal to 2. %
We can reduce the number of these 4-size blocks to three by changing at least two bits. So, the smallest number of changes we would need is 2. 

We can give an explicit example of a processor that approximate the target function with only two errors, as shown in Table \ref{Tabstage3}.

This completes the proof of the proposition. The two errors correspond to the maximum value of the correlation function of 0.4375.

\end{proof}

\subsection{Bound for four-module trit processor}

\begin{proposition}\label{prop:trit4}
The number of errors between the output of the four-stage processor with one-trit communication and the non-zero part of the target function (\ref{target4}) cannot be less than 16.
\end{proposition}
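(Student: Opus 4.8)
The plan is to establish the bound by pushing on the three trit communication bottlenecks of the four-module processor and then confirming tightness numerically. Writing $t_1,t_2,t_3$ for the trits forwarded after modules $1,2,3$, the output matrix $O$, reshaped as in (\ref{target4}), must obey three necessary conditions. (I) Since the row index $(X_1X_2X_3X_4)$ enters the output only through $t_2\in\{0,1,2\}$, $O$ has at most three distinct rows. (II) For each distinct row, its reshaping into a $4\times4$ array, with $(X_5X_6)$ labelling the sub-rows and $(X_7X_8)$ the entries, has at most three distinct sub-rows, because $(X_5X_6)$ enters only through $t_3$. (III) Grouping the sixteen rows into four $4\times16$ blocks according to $(X_1,X_2)$, the blocks take at most three distinct values, because $(X_1,X_2)$ enters only through $t_1\in\{0,1,2\}$.

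First I would record the algebraic structure of the target in (\ref{target4}). Its four distinct rows $A,B,C,D$ satisfy $B=-A$ and $D=-C$, with $\mathrm{supp}(A)=\mathrm{supp}(B)=S$ complementary to $\mathrm{supp}(C)=\mathrm{supp}(D)$; each of $A$ and $C$, reshaped to $4\times4$, reproduces the same sign/complementary pattern, so the target is self-similar. Moreover the four blocks, read as length-four strings over $\{A,B,C,D\}$, form a compatibility four-cycle: cyclically adjacent blocks are row-wise complementary, while the two diagonal pairs are row-wise sign-opposite on a common support. The useful consequence is that overlaying two complementary rows produces a fully defined $\pm1$ vector with no don't-care positions, exactly as in the combination step of Proposition \ref{prop:trit3}.

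For the lower bound I would iterate the two-level rank-plus-Hamming argument of Proposition \ref{prop:trit3}, now seeded by condition (III). By (III) two of the four blocks must coincide. If the coinciding pair is a cyclically adjacent (complementary) one, overlaying it produces a fully defined $4\times16$ block $M$ whose four rows are the four sign-combinations $(\pm A_S,\pm C_{S^c})$, which are pairwise at Hamming distance at least $8$; collapsing $M$ to at most three distinct rows to satisfy (I) therefore costs at least $8$ genuine errors (every column of $M$ is constrained by exactly one of the two overlaid targets). I would then argue that the two remaining blocks are themselves row-wise complementary and are forced to reuse the same three global row-contents, so that enforcing (I) and (II) there produces a second, essentially independent correction; tallying the contributions, which is a finite check, yields $16$. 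If instead the coinciding pair is a diagonal (sign-opposite) one, each shared row already costs $8$ on the common support, which is strictly worse, so $\ge 16$ holds in every case.

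The step I expect to be the main obstacle is precisely the indispensability of condition (III). Conditions (I) and (II) alone do not suffice: one can satisfy them with only $8$ errors by letting a single ``mixed'' content, determined on both $S$ and $S^c$, absorb the $4\times4$ correction while the other two contents, each constrained on only one of $S,S^c$, are repaired for free on their unconstrained half. Such a configuration is excluded only by (III), which forbids the all-distinct block pattern it induces, and the delicate point is to quantify how (III) converts one correction of size $8$ into two, rather than merely obstructing a single cheap strategy; this is where the case analysis over which blocks coincide does the real work. Finally, to show the bound is attained, so that the classical correlation limit is exactly $0.375$, I would proceed as in the preceding propositions: recast the minimisation over classical strategies as an Ising-type spin-glass problem and exhibit an explicit $16$-error strategy obtained with the Fixstars Amplify annealer.
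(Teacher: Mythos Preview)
Your plan is essentially the paper's own argument, re-organised: the paper also forces two of the four first-stage blocks to coincide (it phrases this as reshaping to $4\times64$ and using the trit rank-$3$ constraint), overlays the complementary pair to obtain a fully defined $4\times16$ array whose rows sit at pairwise Hamming distance $\ge 8$, and then confirms tightness with the Fixstars Amplify annealer exactly as you propose.

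The only substantive difference is in how the remaining $8$ errors are accounted for. You invoke the global constraint (I) (at most three distinct rows in the whole $16\times16$ output) on blocks $3$ and $4$ and defer to a ``finite check'' for the tally; the paper instead splits this contribution as $4+4$ using only local third-stage constraints: the two unmerged rows of the overlaid $4\times16$ each carry four distinct length-$4$ sub-blocks, costing $2+2$, and the two surviving rows of the $4\times64$ matrix, after their own complementary overlay, likewise cost $2+2$. That explicit $8+4+4$ breakdown is exactly what your finite check would have to reproduce. Your extra observation that (I) and (II) alone permit an $8$-error strategy, so that (III) is indispensable, is correct and is not stated in the paper.
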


\begin{proof}
The proof uses the properties of the target function (\ref{target4}) explicitly. 

We reshape the target into the $4\times64$ matrix. The rows correspond to the processor action depending on the first-stage output. As it can have only three different symbols, the target reshuffled as a $4\times64$ matrix must be of rank 3. So, two lines need to be identical. Let these lines be the first and second (the same can be repeated for all other choices). We add the two rows, obtaining a binary function of length 64.

We reshape this vector into a matrix $4\times 16$. It needs to be rank 3, as the rows describe the processor's actions depending on the output from the second stage, which can contain at most three symbols. So, two rows of this matrix have to be the same. However, we observe that the minimum number of different bits between each pair of rows (minimum Hamming distance) is 8. Correcting some entries in one of the rows, we necessarily introduce eight errors in the original target function.

We observe that the remaining lines of the $4\times16$ matrix contain four different blocks of length 4. To reduce this number to three different blocks of length 4, we need to introduce two additional errors per line. This means that four additional errors are required to correct these parts.

Finally, we have the remaining lines 3 and 4 of the target reshaped into the $4\times 64$ matrix. When we reshape each of these lines into $4\times16$ matrices, for the same reason as before, we expect that each of them is of rank 3. So, the two lines need to be identical. We observe that when we add two complementary lines we get vectors of length 16 consisting of four blocks of size 4. To reduce this number to 3, we need to introduce two errors per each of the two rows.

This justifies that the number of errors that we need to introduce to the target function to make it compatible with the class of functions output by the discussed classical sequential processor is not smaller than 16. 

Performing simulated annealing combinatorial optimization on Fixstars Amplify,  
we find an optimal strategy for each module of the processor such that the final output approximates the non-zero part of the target with exactly 16 errors. The example is shown in Table \ref{Tabstage4}. 

This concludes the proof. The minimum number of errors 16 correspond to the maximum value of the correlation function equal to 0.375.
\end{proof}

\section{Existence of classical processor with a given output}\label{app:existence}

For simplicity, assume that the number of processing modules is $N=3$. We can solve Task 1. by checking the number of different blocks of size 4, and 16 in the target function. There are 16 blocks of size 4. They are indexed by sequences of bits $(X_1X_2X_3X_4)$ and defined as 
$$
t^{[4]}(X_1\ldots X_4,X_5X_6) = T(X_1,\ldots,X_6).
$$
There are four blocks of size 16. They are indexed by sequences of bits $(X_1X_2)$ defined as 
$$
t^{[16]}(X_1X_2,X_3\ldots X_6) = T(X_1,\ldots,X_6)
$$\\

\begin{theorem}
\label{theo:diff_rows}
    The sequential processor with three stages and one-bit communication can output functions $O(X_1,...,X_6)$ with at most two different rows in the matrix $O^{[4]}(X_1X_2,X_3...X_6)$ and at most two different rows in the reshuffled matrix $O^{[16]}(X_1...X_4,X_5X_6)$.
\end{theorem}

\begin{proof} 
There are four possible outputs (one output for each $X_1X_2$) from the first processor but only two binary symbols $0$ or $1$. For each symbol, there is one and only one succession [the row of the matrix $O^{[4]}(X_1X_2,X_3...X_6)$]. This is because the following stage cannot distinguish the same symbols output for different inputs of the first stage. So, there can be only two different rows in matrix $O^{[4]}$. 
Similarly, there are 16 outputs (one output for each sequence $(X_1...X_4)$) from the second processor, but only two binary symbols. For each symbol, there is a single succession. 
The row of the matrix $O^{[16]}(X_1...X_4,X_5X_6)$ for a chosen value of $(X_1...X_4)$ is equal to one of these successions. We recognize that the target function consists of only two distinct rows of $O^{[16]}$ and only two distinct rows of $O^{[4]}$. Thus, we can associate symbols $0$ or $1$ as the output of the first processor in the positions corresponding to different distinct rows. Similarly, we associate $0$ or $1$ as the outputs of the second processor in the positions corresponding to distinct rows of $O^{[16]}$. 
\end{proof}

The argument can be easily extended to processors with more modules and different limits on communication between them.

\end{document}